\newcommand{\alg}[1]{\textnormal{\scshape #1}}
\begin{document}
\title{Online Busy Time Scheduling with Flexible Jobs}
%
%
\author{Susanne Albers\and G. Wessel van der Heijden}
\authorrunning{S. Albers and G. W. van der Heijden}
%
\institute{Technical University of Munich, Germany; TUM School of Computation, Information and Technology, Department of Computer Science  \\\email{\{susanne.albers,wessel.heijden\}@tum.de}}
\maketitle              
\begin{abstract}
We consider the online busy time scheduling problem motivated by energy and cost minimization in cloud computing systems.  
The input is a set of jobs $J=\{1,\dots,n\}$ where each job $j\in J$ has a release time $r_j$, deadline $d_j$, and processing time $p_j$. 
$m$ homogeneous machines are given with a parallelism parameter $g\geq 1$, which is the maximal number of jobs that can be processed simultaneously on a machine. 
A machine is called \emph{busy} when at least one job is being processed. 
The objective is to find a feasible schedule for all jobs such that the sum of busy times over all machines is minimized. 
We consider the online setting, where a job $j\in J$ is revealed at its release time $r_j$. 

We show multiple algorithms in different problem variants that have a tight competitive ratio. 
For the busy time scheduling problem, uniform processing time jobs, and where the parallelism is unbounded ($g=\infty$), we show a $2$-competitive algorithm and an online adversary that shows that the algorithm is tight. 
For the setting where jobs have arbitrary processing time, agreeable deadlines, and the parallelism is unbounded, we show a different tight $2$-competitive algorithm. 
For machines with bounded parallelism, we show lower bounds on the competitive ratio of any online algorithm when $g$ is small. 
Furthermore, we improve the setting with arbitrary jobs where the algorithm is allowed lookahead. 
\keywords{Scheduling \and Online algorithms \and Competitive analysis \and Energy-efficiency and Power Management \and Busy time scheduling.}
\end{abstract}
\section{Introduction}\label{sec:introduction}
Busy time scheduling is a fundamental scheduling problem with applications to cloud computing and energy-aware systems. 
In cloud computing environments, customers pay for the usage time of a virtual machine, independent of the direct workload on the machine. 
The busy time scheduling model captures this setting. 

In the busy time scheduling problem, we are given a set of jobs $J=\{1,\dots,n\}$ where each job $j\in J$ has release time $r_j\in\mathbb{R}$, deadline $d_j\in\mathbb{R}$, and processing time $p_j\in\mathbb{R}$. 
A job $j\in J$ is \emph{rigid} iff $p_j = d_j-r_j$ and \emph{flexible} if $p_j\leq d_j-r_j$. 
Given are $m$ homogeneous machines, each able to process at most $g$ jobs in parallel. 
A feasible schedule assigns each job $j\in J$ to a machine in a non-preemptive interval with start time $s_j\in [r_j,d_j-p_j)$. 
For this interval $[s_j,s_j+p_j)$, the machine is \emph{busy}. 
Let $\mathcal{I}$ be the set of disjoint intervals $I\in\mathcal{I}$ where $I=[I^-,I^+)$ on a real line. 
The \emph{span} of $\mathcal{I}$ is defined as $Sp(\mathcal{I}) =\sum_{I\in\mathcal{I}}(I^+-I^-)$, i.e. the span $Sp$ is the Lebesgue measure. 
The span of a machine $i$ is $Sp(\cup_{j:m_j=i}[s_j,s_j+p_j))$, i.e. the total time at least one job is scheduled on machine $i$. 
The \emph{busy time} of a schedule is the sum of all machine spans: 
\[
    \sum_{i=1}^m Sp\left(\bigcup_{j : m_j = i}[s_j,s_j+p_j)\right)
\]

We study the online busy time scheduling problem where each job $j\in J$ is only revealed at its release time $r_j$. 
A job $j\in J$ must be completed within $[r_j,d_j)$. 
We use the standard competitive analysis for online algorithms against an all-powerful offline adversary on a worst-case input. 
An unbounded number of machines can be used if needed, as is the case for users of cloud computing environments. 
Busy time scheduling is an \emph{energy efficiency} problem as well. 
One approach to reduce energy usage is to minimize the time that a machine is enabled. 
Algorithms that optimize for energy efficiency have been extensively studied in the past \cite{DBLP:journals/cacm/Albers10,DBLP:journals/jacm/BansalKP07,DBLP:journals/scheduling/GerardsHH16,DBLP:journals/sigact/IraniP05,DBLP:conf/birthday/ChauL20}. 
Winkler et al. first showed that busy time scheduling is NP-hard for $g=2$ with rigid jobs \cite{DBLP:conf/soda/WinklerZ03}. 
They considered an optical network design problem to minimize the switching cost of the network, which is equivalent to the busy time scheduling problem with rigid jobs. 
Flammini et al. showed that a constant $4$-approximation is possible for the rigid jobs with arbitrary processing times \cite{DBLP:journals/tcs/FlamminiMMSSTZ10}. 
This was subsequently improved to a $3$-approximation by Shalom et al., who showed that no online algorithm has an absolute competitive ratio better than $g$ for the busy time scheduling problem \cite{DBLP:journals/tcs/ShalomVWYZ14}. 

For the offline busy time scheduling problem with flexible jobs, Khandekar et al. showed an optimal polynomial time algorithm when machines have unbounded parallelism ($g=\infty$). 
When machines have unbounded parallelism, the busy time scheduling problem reduces to minimizing the span of a single machine. 
Building on this, Chang et al. \cite{DBLP:journals/scheduling/ChangKM17} showed a $3$-approximation for flexible jobs, bounded parallelism ($g<\infty$), and arbitrary processing times. 
Fong et al. showed an optimal $O(n^3)$ dynamic program when jobs have \emph{agreeable} deadlines ($r_i\leq r_j$ iff $d_i\leq d_j$) and machines have unbounded parallelism. 
Khandekar et al. showed that the offline agreeable setting with rigid jobs and bounded parallelism has a greedy $2$-approximation  \cite{DBLP:conf/fsttcs/KhandekarSST10}. 

For the online busy time scheduling problem with flexible jobs, Ren and Tang showed two algorithms for unbounded parallelism, achieving $11.9$- and $6.9$-competitive ratios \cite{DBLP:conf/spaa/RenT17}. 
Simultaneously, Koehler and Khuller showed a $5$-competitive algorithm for arbitrary jobs and unbounded parallelism \cite{DBLP:conf/wads/KoehlerK17}. 
Both works showed that no online algorithm exists with a competitive ratio better than $\varphi=\frac{1+\sqrt{5}}{2}$ \cite{DBLP:conf/spaa/RenT17,DBLP:conf/wads/KoehlerK17}. 
Recently, Liu and Tang improved the lower bound for unbounded parallelism and arbitrary processing times to $4$ and noted flaws in the analysis of the claimed $4$-competitive algorithm of Fong et al. \cite{DBLP:conf/spaa/LiuT24,DBLP:conf/tamc/FongLLPWZ17}. 
Recent literature studied variants with learned predictions for the online setting with unbounded parallelism. 
Bovenkamp and Liu showed an online algorithm with access to a prediction on the time the machine should be active \cite{DBLP:conf/sofsem/BovenkampL25}. 
Liu and Tang considered the setting where the processing time of a job is provided as a prediction \cite{DBLP:conf/spaa/LiuT24}. 
When parallelism is bounded, Koehler and Khuller showed an online $12$-competitive algorithm with $2p_{max}$ lookahead in the setting with flexible jobs and arbitrary processing times, where $p_{max} = \max_{j\in J}(p_j)$ \cite{DBLP:conf/wads/KoehlerK17}. 
If the online algorithm is given a lookahead of $\ell$, any job $j\in J$ is revealed at time $r_j-\ell$. 
Lookahead weakens the adversary to overcome the lower bound on the competitive ratio of $g$ shown by Shalom et al. \cite{DBLP:journals/tcs/ShalomVWYZ14}. 
Recently, Calinescu et al. showed an $8$-competitive algorithm for unit jobs, and a tight $2$-competitive algorithm when the unit jobs have agreeable deadlines on heterogeneous machines \cite{DBLP:conf/esa/CalinescuDKZ24}.

\subsection{Our contributions}
In this paper, we study online busy time scheduling with flexible jobs, i.e., for any job \(j \in J\) it holds that \(p_j \le d_j - r_j\). 
The paper is split into two parts: unbounded parallelism (Section \ref{sec:non_preemptive_setting}) and bounded parallelism (Section \ref{sec:busy_time}).

For machines with unbounded parallelism we prove a lower bound of \(2-\epsilon\) for any online algorithm on uniform jobs, even under agreeable deadlines (Theorem \ref{thm:unbounded_uniform_lb}), and match it with a class of \(2\)-competitive algorithms (Theorem \ref{thm:unbounded_uniform}). 
For agreeable deadlines with arbitrary processing times, we give a \(2\)-competitive online algorithm, which is tight (Theorem \ref{thm:unbounded_agreeable}, Corollary \ref{col:unbounded_agreeable_lb}). 
To the best of our knowledge, this is the first tight bound for jobs with arbitrary processing times in the online setting.

For machines with bounded parallelism, we present a \(2\)-competitive algorithm for any \(g<\infty\) for uniform jobs (Theorem \ref{thm:bounded_uniform}) and present a lower bound for small \(g\) (Lemma~\ref{lem:bounded_uniform_lb}). 
For arbitrary processing times and \(p_{\max}\) lookahead, we obtain a \(9\)-competitive algorithm (Theorem \ref{thm:bounded_general_ub}), improving the algorithm by Koehler and Khuller \cite{DBLP:conf/wads/KoehlerK17} by halving the lookahead and tightening the ratio.

\section{Unbounded parallelism}
\label{sec:non_preemptive_setting}
We consider two natural restricted settings for machines with unbounded parallelism ($g=\infty$) and obtain tight results for both. 
In Section~\ref{sec:unbounded_uniform} we show a tight $2$-competitive algorithm for the fundamental variant where jobs have uniform processing times. 
In Section~\ref{sec:non_preemptive_agreeable} we show a tight $2$-competitive algorithm for the setting where jobs have agreeable deadlines.

\subsection{Uniform jobs}\label{sec:unbounded_uniform}
We consider the busy time scheduling problem with uniform processing times, i.e. $p_j=p$ for all $j\in J$. 
We assume without loss of generality that $p=1$. 
We show that any online algorithm $\mathcal{A}$ must have a competitive ratio of at least $2$ using an adaptive adversary. 
The adversary releases jobs in \emph{components} $C_i$, where $C_i$ is a set of jobs that any algorithm schedules within a contiguous busy interval on a single machine. 
Component $C_1$ has an initial rigid \emph{flag job} $f_1$ with $r_{f_1}=0$ and $d_{f_1}=1$, which has to be scheduled immediately. 
Consider component $C_i$ for $1\leq i\leq k$ for a large $k$. 
Let job $j$ be the latest job scheduled by algorithm $\mathcal{A}$. 
\begin{itemize}
    \item If $\mathcal{A}$ schedules job $j$ such that $s_j< d_{f_i}$, release job $j'$ at $r_{j'}=s_j+\epsilon$ with $d_{j'}=i3$. 
    \item If $\mathcal{A}$ delays job $j$ such that $s_j \geq d_{f_i}$, mark job $j$ as flag job $f_{i+1}$ and jobs $j'$ with $r_{j'}>r_j$ are part of component $C_{i+1}$. 
    Release job $j'$ at $r_{j'}=s_{f_{i+1}}+\epsilon$ with deadline $d_{j'}=3(i+1)$. 
\end{itemize}
The adversary is defined such that jobs in every other component in the schedule can be rearranged, which almost removes the busy time cost of the rearranged components. 
For ease of notation, we say the infimum of $C_i$ is defined as $\inf(C_i) = \min\{s_j : j\in C_i\}$ and the supremum of $C_i$ is defined as $\sup(C_i) = \max\{s_j+1 : j\in C_i\}$.
%
%
Before formalizing this in Theorem~\ref{thm:unbounded_uniform_lb}, we show the following lemma, which is shown using a simple averaging argument. 

\begin{lemma}\label{lem:unbounded_uniform_lb_independent_set}
    Given are values $X=\{x_1,\dots,x_k\}$. 
    There exists a set $S\subseteq X$ of non-consecutive $x_i\in S$ such that $\sum_{x_i\in S}x_i\geq\frac{1}{2}\sum_{x_i\in X}x_i$. 
\end{lemma}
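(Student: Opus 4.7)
The plan is to exploit the trivial partition of $X$ into its odd-indexed and even-indexed subsequences, each of which is automatically non-consecutive, and then apply an averaging argument.

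Concretely, I would define
\[
  S_{\mathrm{odd}} = \{x_i \in X : i \text{ odd}\}, \qquad S_{\mathrm{even}} = \{x_i \in X : i \text{ even}\}.
\]
By construction, no two elements in $S_{\mathrm{odd}}$ have consecutive indices, and likewise for $S_{\mathrm{even}}$, so both are admissible candidates for $S$. Since $\{S_{\mathrm{odd}}, S_{\mathrm{even}}\}$ partitions $X$, we have
\[
  \sum_{x_i \in S_{\mathrm{odd}}} x_i \;+\; \sum_{x_i \in S_{\mathrm{even}}} x_i \;=\; \sum_{x_i \in X} x_i.
\]
Hence at least one of the two sums is $\geq \tfrac{1}{2}\sum_{x_i \in X} x_i$, and we take $S$ to be the corresponding subset.

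There is essentially no obstacle here: the lemma is exactly a pigeonhole/averaging statement, as the authors foreshadow. The only minor point to flag is whether the values $x_i$ need to be nonnegative; but since the partition identity $\sum S_{\mathrm{odd}} + \sum S_{\mathrm{even}} = \sum X$ is an exact equality, the ``one of them is at least half'' conclusion holds for arbitrary real values, so no sign hypothesis is needed.
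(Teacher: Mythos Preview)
Your proposal is correct and is precisely the simple averaging argument the paper alludes to (the paper does not spell out a proof beyond that remark). Partitioning into odd- and even-indexed elements and taking the larger sum is the intended argument.
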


Let the span of component $C_i$ be denoted as $x_i =  \sup(C_i) - \inf(C_i)$. 
For component $C^*_i$ in the optimal schedule $S^*$, its span is denoted as $x^*_i$. 

\begin{theorem}\label{thm:unbounded_uniform_lb}
    No online algorithm exists with a competitive ratio less than $2$ for the busy time scheduling problem with unbounded parallelism and uniform jobs. 
\end{theorem}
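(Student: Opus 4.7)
The plan is to exhibit, for any online algorithm $\mathcal{A}$, a feasible offline schedule whose busy time is asymptotically at most half of $\mathcal{A}$'s. Since each component $C_i$ is by construction scheduled in a contiguous busy interval on one machine and distinct components occupy disjoint busy intervals, one obtains immediately $\mathrm{busy}(\mathcal{A}) \ge \sum_{i=1}^k x_i$. Moreover, each flag $f_i$ has processing time $1$, so $\sum_i x_i \ge k$.

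For the offline upper bound, I apply Lemma~\ref{lem:unbounded_uniform_lb_independent_set} to the subsequence $(x_2,\dots,x_{k-1})$ and obtain a non-consecutive index set $S \subseteq \{2,\dots,k-1\}$ with $\sum_{i\in S} x_i \ge \frac{1}{2}\sum_{i=2}^{k-1} x_i$. I then build OPT as follows. For $i \notin S$, OPT copies $\mathcal{A}$'s placement of $C_i$, yielding a busy interval $B^*_i = [\inf(C_i),\sup(C_i))$ of length $x_i$. For $i \in S$ (and hence $i{-}1,i{+}1 \notin S$) I absorb $C_i$ into the neighbouring intervals: the flag $f_i$ is placed at $r_{f_i}$, merging into $B^*_{i-1}$ after an $\epsilon$-extension on the right, and every non-flag job of $C_i$ is scheduled in parallel at the single instant $\inf(C_{i+1})$, sitting inside $B^*_{i+1}$. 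Unbounded parallelism ($g=\infty$) makes the simultaneous placements legal.

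The main obstacle is the feasibility verification, which relies on the precise deadline and release-time pattern enforced by the adversary. For a non-flag job $j \in C_i$ the two required inequalities are $r_j \le \inf(C_{i+1})$ and $\inf(C_{i+1})+1 \le d_j = 3i$; the first follows from $r_j \le \sup(C_i) - 1 + \epsilon \le \inf(C_{i+1})$, and the second from $\inf(C_{i+1}) = s_{f_{i+1}} \le d_{f_{i+1}} - 1 = 3i - 1$. For the flag $f_i$ the window $[r_{f_i},d_{f_i})$ overshoots $\sup(C_{i-1})$ by at most $\epsilon$, which is absorbed by the $\epsilon$-extension of $B^*_{i-1}$; the deadline $d_{f_i} = 3(i-1)$ is respected because $r_{f_i} \le d_{f_{i-1}} + \epsilon = 3(i-2) + \epsilon$. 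The boundary components $C_1$ (whose flag is rigid at $[0,1)$) and $C_k$ (which lacks a right neighbour) are handled separately and contribute only an $O(1)$ additive term.

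Putting the pieces together, $\mathrm{OPT} \le \sum_{i\notin S} x_i + O(\epsilon k) + O(1) \le \frac{1}{2}\sum_i x_i + O(\epsilon k) + O(1)$, while $\mathrm{busy}(\mathcal{A}) \ge \sum_i x_i \ge k$. The competitive ratio is therefore at least
\[
\frac{\sum_i x_i}{\frac{1}{2}\sum_i x_i + O(\epsilon k) + O(1)},
\]
which tends to $2$ as $k \to \infty$ and $\epsilon \to 0$. Consequently, for every fixed $\epsilon>0$ the adversary can be tuned (choose $k$ sufficiently large and the release offset sufficiently small) to force the ratio above $2-\epsilon$, proving the theorem.
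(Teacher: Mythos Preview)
Your proof is correct and follows essentially the same approach as the paper's own argument: apply Lemma~\ref{lem:unbounded_uniform_lb_independent_set} to the inner component spans $x_2,\dots,x_{k-1}$, dissolve the selected non-consecutive components by pushing each flag left into $C_{i-1}$ (at cost $\epsilon$) and all non-flag jobs right into $C_{i+1}$, and conclude that the surviving busy time is at most $\tfrac12\sum_i x_i + O(1) + O(\epsilon k)$. The only cosmetic difference is that the paper names the \emph{kept} set $S$ (with small sum) and the dissolved set $S'$, whereas you name the dissolved set $S$; your feasibility verification is also somewhat more explicit than the paper's.
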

\begin{proof}
    We consider an arbitrary online algorithm $\mathcal{A}$ with components $C_1,\dots,C_k$ for some large $k$ according to the adversary and an optimal schedule $S^*$. 
    Consider any three consecutive components $C_{i-1},C_{i},C_{i+1}$. 
    The flag job $f_i$ has release time $r_{f_i} = \sup(C_{i-1})-1+\epsilon$. 
    Therefore, flag job $f_i$ can be rescheduled to overlap with the jobs from component $C_{i-1}$ at $s_{f_i}=r_{f_i}$, which adds $\epsilon$ busy time to component $C_{i-1}$. 
    All other jobs $C_i\backslash\{f_i\}$ have deadline $\inf(C_{i+1})+1$ and can therefore be scheduled with component $C_{i+1}$. 
    Therefore, rescheduling all jobs in component $C_i$ decreases the busy time by $x_i-\epsilon$. 
    Let $X$ be the set of all $x_1,\dots,x_k$ of the respective components $C_1,\dots,C_k$. 
    Let $X'\leftarrow X\backslash\{x_1,x_k\}$, since both $C_1$ and $C_k$ do not have two adjacent components. 
    Since all components in $X'$ have two adjacent components, we apply Lemma~\ref{lem:unbounded_uniform_lb_independent_set} to obtain set $S'$ of non-consecutive components from $X'$ such that $\sum_{x_i\in S'}x_i\geq \frac{1}{2}\sum_{x_i\in X'}x_i$. 
    Let $S\leftarrow X'\backslash S'$ be the set of remaining components with $\sum_{x_i\in S}x_i\leq\frac{1}{2}\sum_{x_i\in X'}x_i$. 
    Therefore, $ALG= x_1+x_k+\sum_{x_i\in X'}x_i\geq x_1+x_k+\epsilon|S|+2\sum_{x_i\in S}x_i$ where $\epsilon|S|$ is the additional busy time per component after rescheduling. 
    This gives the following bound for any algorithm $\mathcal{A}$. 
    \[
        x_1+x_k+\epsilon|S|+2\sum_{x_i\in S}x_i\geq \epsilon|S|-x_1-x_k+2\sum_{x_i\in S^*}x_i= 2OPT+\epsilon|S|-x_1-x_k
    \]
    where the inequality holds by adding the omitted $x_1,x_k$ back to $S$ to consider all components, which is an upper bound on the optimal schedule.  
    Since $x_1$ and $x_k$ are bounded by a constant, picking a sufficiently small $\epsilon \ll 1/k$, e.g. $\epsilon:=1/k^2$ bounds $\epsilon|S|$ to $1/k$, and $\sum_{x_i\in X}x_i$ and $\sum_{x_i\in S}x_i$ are arbitrarily large for arbitrarily large $k$, it holds that $ALG\ge2OPT+\epsilon|S|-x_1-x_k = (2-\varepsilon)OPT$ for an arbitrarily small $\varepsilon$ depending on $k$. 
    Since $\varepsilon$ only depends on an arbitrary $k$, no online algorithm exists with a competitive ratio less than $2$. 
\end{proof}

We show a $2$-competitive algorithm for unbounded parallelism and uniform jobs. 
Many algorithms in the literature use the scheduling scheme of waiting until a job is at its deadline (the \emph{flag} job) and scheduling jobs alongside these flag jobs for some interval $[s_j,d_j+\alpha)$ \cite{DBLP:conf/tamc/FongLLPWZ17,DBLP:conf/wads/KoehlerK17,DBLP:conf/esa/CalinescuDKZ24,DBLP:conf/spaa/RenT17}. 
We show that for any $\alpha\in[0,1)$, this class of algorithms is $2$-competitive. 
The algorithm \alg{UnboundedUniform} with input parameter $\alpha$ therefore represents this entire class of algorithms. 
Algorithm \alg{UnboundedUniform} maintains a set of flag jobs $F$ and a set of intervals $S$ representing the schedule. 
\begin{algorithm}[hbt!]
\caption{\alg{UnboundedUniform}$(\alpha)$}\label{alg:unbounded_uniform_scheduler}
\DontPrintSemicolon
$S,J,F\leftarrow\emptyset$\tcp*{schedule, pending jobs, flag jobs}
\For{$t=0$ to $d_{\max}$ with $t\in\mathbb{R}$}{
    Add newly released jobs at $t$ to $J$\;
    Schedule all jobs $i\in J$ for which $[t,t+1)\subseteq S$; Remove all $i$ from $J$\;
    \If{$t = d_j-1$ for some $j\in J$}{
        $F\leftarrow F\cup\{j\}$;
        $S\leftarrow S\cup [t,t+1+\alpha)$\;
        Schedule all $i\in J$ with $[t,t+1)\subseteq S$; Remove all $i$ from $J$\;
    }
}
\end{algorithm}

\begin{theorem}\label{thm:unbounded_uniform}
    The online busy time scheduling problem with unbounded parallelism and uniform jobs admits a tight $2$-competitive algorithm. 
\end{theorem}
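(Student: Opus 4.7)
The lower bound is Theorem~\ref{thm:unbounded_uniform_lb}, so the plan is to prove the matching upper bound: that \alg{UnboundedUniform}$(\alpha)$ is $2$-competitive for every $\alpha\in[0,1)$. First I would characterize the structure of ALG. A busy interval $B$ is a maximal union of flag extensions, containing flags $f_1,\dots,f_k$ at flag times $t_1<\cdots<t_k$, and has length $L_B=t_k-t_1+1+\alpha$. Two invariants drive the proof: within $B$, consecutive flag times satisfy $\alpha<t_j-t_{j-1}\le 1+\alpha$; and globally (indexing flags across all busy intervals), each flag $f_j$ with $j\ge 2$ has release $r_{f_j}>t_{j-1}+\alpha$. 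The latter holds because at time $t_{j-1}$ the algorithm's schedule $S$ contains $[t',t'+1)$ for every $t'\in[t_{j-1},t_{j-1}+\alpha]$, so any $f_j$ released earlier would be scheduled as a non-flag job rather than becoming a flag.

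Next I would transfer these invariants to OPT. OPT processes $f_j$ during an interval $[a_j,a_j+1)\subseteq[r_{f_j},t_j+1)$, so $a_j\in(t_{j-1}+\alpha,t_j]$. In particular $a_j-a_{j-1}>\alpha$ for all $j\ge 2$, meaning any two consecutive OPT flag processings are either disjoint or overlap by strictly less than $1-\alpha$; equivalently, the union of any two consecutive processing intervals has length strictly greater than $1+\alpha$.

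The core of the argument is a pair-based charging scheme. Inside each busy interval $B$, I would group the flags into consecutive pairs $(f_{2i-1},f_{2i})$, with a singleton $f_k$ if $k$ is odd, and assign to each pair (or singleton) a share of $L_B$: the gap chunks $[t_j,t_{j+1})$ of lengths $g_j=t_{j+1}-t_j$ together with the trailing chunk $[t_k,t_k+1+\alpha)$ are distributed so that every pair receives a share of at most $2(1+\alpha)$ and every singleton receives exactly $1+\alpha$. Since each pair's OPT processing union has length greater than $1+\alpha$ and each singleton's OPT processing has length $1$, the per-pair ratio of ALG share to OPT share is at most $2$. Summing over pairs inside $B$ gives $L_B\le 2|U_B|$ where $U_B=\bigcup_{f_j\in B}[a_j,a_j+1)$.

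The main obstacle will be the global aggregation. The sets $U_B$ for different busy intervals may share a small portion of OPT's busy time: for two adjacent busy intervals $B$ and $B'$, the last flag of $B$ and the first flag of $B'$ can be processed by OPT with up to $1-\alpha$ overlap. I plan to resolve this by distributing the boundary chunks asymmetrically between adjacent busy intervals, tightening the per-busy-interval bound $L_B\le 2|U_B|$ into a disjoint accounting $\sum_B|U_B|\le OPT$ and thus concluding $ALG\le 2\,OPT$. Combined with Theorem~\ref{thm:unbounded_uniform_lb}, this establishes the tight $2$-competitive bound.
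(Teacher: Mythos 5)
Your overall plan (lower bound from Theorem~\ref{thm:unbounded_uniform_lb}, plus showing \alg{UnboundedUniform}$(\alpha)$ is $2$-competitive) is the right one, and your structural invariants ($t_j-t_{j-1}\in(\alpha,1+\alpha]$ within a busy interval, and $r_{f_j}>t_{j-1}+\alpha$, hence $a_j\in(t_{j-1}+\alpha,t_j]$ for OPT's processing starts) are correct and are exactly the facts the paper also uses. However, your charging scheme has a genuine gap. You charge each consecutive pair $(f_{2i-1},f_{2i})$ an ALG share of at most $2(1+\alpha)$ against that pair's OPT processing union of length greater than $1+\alpha$, and then ``sum over pairs'' to conclude $L_B\le 2|U_B|$. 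This does not follow: the OPT unions of \emph{consecutive pairs} are not disjoint. The intervals $[a_{2i},a_{2i}+1)$ and $[a_{2i+1},a_{2i+1}+1)$ can overlap by up to $1-\alpha$, so $\sum_i|\mathrm{union}_i|$ can exceed $|U_B|$ by roughly $(k/2)(1-\alpha)$; equivalently, a pair's \emph{marginal} contribution to $|U_B|$ can be as small as about $2\alpha$ while its ALG share is up to $2(1+\alpha)$, giving a per-pair ratio of $(1+\alpha)/\alpha$, which blows up as $\alpha\to 0$. You flag this overlap issue only between adjacent busy intervals $B,B'$ and defer it to ``asymmetric distribution of boundary chunks,'' but the same problem occurs between every two consecutive pairs inside a single busy interval, and for small $\alpha$ it is not a boundary effect that can be absorbed pairwise. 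The statement $L_B\le 2|U_B|$ is in fact true, but proving it requires exploiting the coupling $a_j\le t_j<a_{j+1}-\alpha$ via an \emph{overlapping} charge --- each gap $t_{j+1}-t_j$ is bounded by $\min(1,a_{j+1}-a_j)+\min(1,a_{j+2}-a_{j+1})$, so each OPT increment is charged twice and one divides by two at the end --- rather than by a disjoint pairing; and even then the cross-busy-interval aggregation needs a separate argument.

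For contrast, the paper avoids all of this by partitioning according to \emph{OPT's} connected components $C^*_1,\dots,C^*_k$ rather than ALG's busy intervals: if $F_i=\{f_p,\dots,f_q\}$ are the flags whose OPT processing lies in $C^*_i$, then $\inf(C^*_i)\le s_{f_p}$ (the first flag starts at its deadline) and $\sup(C^*_i)\ge d_{f_{q-1}}+\alpha$ (since $r_{f_q}>s_{f_{q-1}}+\alpha$), so the ALG extensions of all flags except the last are contained in $[\inf(C^*_i),\sup(C^*_i))$ and contribute at most $Sp(C^*_i)$, while the last flag adds at most $1+\alpha\le Sp(C^*_i)$. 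This yields $2\,Sp(C^*_i)$ per component with no overlap bookkeeping. I would recommend either adopting that decomposition or completing your route with the overlapping-charge argument sketched above; as written, the step ``summing over pairs inside $B$ gives $L_B\le 2|U_B|$'' and the subsequent global aggregation are not established.
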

\begin{proof}
    For each flag job $f\in F$, it holds that $s_f = d_f-1$ and uses interval $[s_f, d_f+\alpha)$. 
    Consider the connected components $C^*_1,\dots,C^*_k$ of scheduled jobs that partition the optimal schedule. 
    Let $F_i\subseteq F$ such that $F_i\subseteq C^*_i$, i.e. all flag jobs scheduled in component $C^*_i$. 
    Since jobs are uniform, for any two flag jobs $f_{j},f_{j+1}$ it holds that $s_{f_j}+\alpha<r_{f_{j+1}}$ otherwise job $f_{j+1}$ would not be a flag job by the definition of \alg{UnboundedUniform($\alpha$)}. 
    We consider the flag jobs $\{f_p,\dots,f_q\}= F_i$. 
    Since job $f_q\in F_i$ and $f_p$ is scheduled at its deadline, it holds that $[s_{f_p},d_{f_{q-1}}+\alpha)\subseteq[\inf(C^*_i),\sup(C^*_i))$. 
    When $|F_i|\geq2$, then the last flag job $f_q$ contributes at most $1+\alpha$ which is less than $[s_{f_p},d_{f_{q-1}}+\alpha)$. 
    If $|F_i| = 1$, it holds that $Sp(C^*_i)\geq 1$, and job $f_p\in F_i$ contributes at most $1+\alpha$. 
    Since $\alpha\leq 1$, we obtain the following bound for all components. 
    \[
        ALG\leq\sum_{f\in F}(1+\alpha) \leq 2\sum_{i: |F_i|\geq 2}Sp(C^*_i) + 2\sum_{i: |F_i|=1}Sp(C^*_i)=  2\cdot OPT
    \]
    Combining the lower bound with the above upper bound gives the theorem. 
\end{proof}

\subsection{Agreeable setting}\label{sec:non_preemptive_agreeable}
A job instance $J$ is agreeable if it holds that $r_i\leq r_j$ iff $d_i\leq d_j$ for any two jobs $i,j\in J$. 
Calinescu et al. considered unit jobs with agreeable deadlines in the online setting with heterogeneous machines \cite{DBLP:conf/esa/CalinescuDKZ24} and obtained a tight $2$-competitive algorithm. 
Their lower bound of $2$ relies on the existence of heterogeneous machines and therefore does not apply to the homogeneous setting. 
We consider the setting where machines have unbounded parallelism and show a tight $2$-competitive algorithm. 
The job instance created by the adversary in Theorem~\ref{thm:unbounded_uniform_lb} is agreeable. 
\begin{corollary}\label{col:unbounded_agreeable_lb}
    No online algorithm exists with a competitive ratio less than $2$ for the busy time scheduling problem with unbounded parallelism, uniform jobs, and agreeable deadlines. 
\end{corollary}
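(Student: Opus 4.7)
The plan is to observe that the adversary used in the proof of Theorem~\ref{thm:unbounded_uniform_lb} produces, up to an infinitesimal tie-breaking perturbation, an agreeable instance; the lower bound then transfers verbatim to the agreeable setting.

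First I would list the jobs in the order the adversary issues them. The rigid flag $f_1$ has $(r_{f_1}, d_{f_1}) = (0,1)$; every subsequent job $j'$ released inside $C_i$ is issued at $r_{j'} = s_j + \epsilon$, strictly later than any previously released job, with deadline $3i$. When the algorithm delays a job so that it is reclassified as the next flag $f_{i+1}$, that job keeps its original release time and deadline $3i$, and all subsequent jobs released in $C_{i+1}$ receive deadline $3(i+1)$. Hence release times are strictly increasing in the issue order, and deadlines are non-decreasing, strictly increasing across component boundaries.

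Second I would verify the condition $r_i \le r_j \iff d_i \le d_j$. Across components this is immediate from the listing above. Within a single component the deadlines tie while release times do not, so to obtain the iff strictly I would perturb the common deadline $3i$ assigned to the $k$-th job issued in $C_i$ to $3i + k\delta$ for an infinitesimal $\delta$, say $\delta := 1/k^4$, chosen much smaller than the $\epsilon := 1/k^2$ used by the adversary. This preserves every inequality used in the proof of Theorem~\ref{thm:unbounded_uniform_lb}: since $\inf(C_{i+1}) \le 3i-1$ (because $f_{i+1}$ was originally released in $C_i$ with deadline $3i$ and occupies one time unit), the non-flag jobs of $C_i$ still have deadline exceeding $\inf(C_{i+1}) + 1$, and $d_{f_i}$ remains strictly below the deadline of every other job in $C_{i-1} \cup C_i$. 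The only real obstacle is these tied deadlines inside each component; the perturbation dissolves it without altering the adversary's behaviour, so the bound $ALG \ge (2-\varepsilon) OPT$ from Theorem~\ref{thm:unbounded_uniform_lb} transfers verbatim and yields the corollary.
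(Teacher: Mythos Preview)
Your approach is exactly the paper's: the corollary is obtained by observing that the adversary instance of Theorem~\ref{thm:unbounded_uniform_lb} is agreeable, so the lower bound carries over. The paper states this in one line without further justification; you are in fact more careful than the paper, since under the strict ``$r_i\le r_j$ iff $d_i\le d_j$'' definition the tied deadlines within a component would technically violate agreeability, and your $\delta$-perturbation resolves this while leaving every inequality in the proof of Theorem~\ref{thm:unbounded_uniform_lb} intact up to negligible error.
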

We use the following structural lemma, shown by Fong et al. \cite{DBLP:conf/tamc/FongLLPWZ17}, for offline agreeable busy time scheduling with unbounded parallelism.  

\begin{lemma}[\cite{DBLP:conf/tamc/FongLLPWZ17}]\label{lem:unbounded_agreeable_ordered_jobs}
    For jobs with agreeable deadlines, an optimal schedule exists such that the jobs are processed in the order of their release times. 
\end{lemma}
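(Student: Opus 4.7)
The plan is a standard exchange argument on inversions. Let $S^*$ be any optimal schedule; call a pair of jobs $(a,b)$ an \emph{inversion} if $r_a<r_b$ but $s_a^*>s_b^*$. Since the instance is agreeable, any such pair automatically satisfies $d_a\le d_b$. The goal is to show that whenever $S^*$ contains an inversion it can be transformed into another feasible schedule with one fewer inversion and the same busy time. Iterating yields an optimal schedule with no inversion at all, which is exactly a schedule that processes jobs in release order.

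For the local step I would always pick an \emph{adjacent} inversion $(a,b)$, meaning $a$ and $b$ are consecutive in the sorted order of start times in $S^*$ (so no other start time lies strictly between $s_b^*$ and $s_a^*$). Such a pair is guaranteed to exist whenever any inversion exists, by the usual bubble-sort argument. The operation then depends on the processing times. If $p_a\le p_b$, move $a$ leftward to start at $s_b^*$; release-time feasibility holds because $r_a<r_b\le s_b^*$, deadline feasibility because $s_b^*+p_a<s_a^*+p_a\le d_a$, and the new occupancy interval $[s_b^*,s_b^*+p_a)$ is contained in $b$'s existing occupancy interval $[s_b^*,s_b^*+p_b)$, so the busy region can only shrink. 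If instead $p_a>p_b$, move $b$ rightward to start at $s_a^*$; deadline feasibility holds because $s_a^*+p_b<s_a^*+p_a\le d_a\le d_b$, and the new interval $[s_a^*,s_a^*+p_b)$ is contained in $a$'s existing interval $[s_a^*,s_a^*+p_a)$, so again the busy region can only shrink.

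After either shift, $a$ and $b$ have the same start time, so the inversion between them disappears, and by optimality of $S^*$ the busy region cannot have actually shrunk, hence the new schedule is still optimal. Because the inversion was chosen adjacent, the moved job does not cross the start time of any third job, so no new inversions with third jobs are created and the total inversion count strictly decreases. The process therefore terminates after finitely many steps at an optimal schedule whose start times are consistent with the release-time order.

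The main obstacle I see is combinatorial rather than numerical: making sure the local modification does not introduce new inversions. Restricting to adjacent inversions cleanly sidesteps this issue. The case analysis on $p_a$ versus $p_b$ is symmetric, and in each case it exploits agreeability — specifically the bound $d_a\le d_b$ — to provide a job of sufficiently large span ($b$ in the first case, $a$ in the second) that can host the relocated interval without growing the busy region.
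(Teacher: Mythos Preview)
The paper does not actually prove this lemma; it is quoted verbatim from Fong et al.\ \cite{DBLP:conf/tamc/FongLLPWZ17} and used as a black box, so there is no in-paper argument to compare against. Your adjacent-inversion exchange argument is correct and is precisely the standard way such ordering lemmas are established: the case split on $p_a$ versus $p_b$ guarantees that the relocated interval is absorbed by the other job's interval, agreeability supplies the needed deadline inequality $d_a\le d_b$ for the second case, and adjacency ensures no third job's start time is crossed, so the inversion count strictly decreases. One tiny omission: in the second case you should also note $r_b\le s_b^*<s_a^*$ for release-time feasibility of $b$ at its new start, but this is immediate.
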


We consider the following online algorithm. 
The algorithm waits until a job $f\in J$ is at its starting deadline $d_f-p_f$, breaking ties arbitrarily. 
Job $f$ is marked as a \emph{flag job} by adding $f$ to $F$, and job $f$ is started at $s_f=d_f-p_f$ which opens the interval $[s_f,d_f)$. 
Any job $j\in J$ with $r_j\leq d_f$ is started as early as possible within $[s_f,d_f)$, which adds $[s_j,s_j+p_j)$ to the schedule $S\leftarrow S\cup[s_j,s_j+p_j)$. 
Any job $j$ for which $[r_j,r_j+p_j)\subseteq S$ is scheduled immediately which does not add any busy time. 
The algorithm maintains the invariant that flag jobs $F$ are scheduled pair-wise disjoint. 
The algorithm is described in Algorithm~\ref{alg:unbounded_agreeable_scheduler}. 
For the analysis, we consider the connected components $C_1,\dots,C_k$. 
Component $C_i$ contains a unique flag job $f_i$ which starts at ($\inf(C_i) = s_{f_i}$), and jobs $j$ with $r_j\in[s_{f_i},s_{f_i}+p_{f_i})$, which are started within $s_j\in[s_{f_i},d_{f_i})$. 

\begin{algorithm}[hbt!]
\caption{\alg{UnboundedAgreeable}}\label{alg:unbounded_agreeable_scheduler}
\DontPrintSemicolon
$S,J,F\leftarrow\emptyset$\tcp*{schedule, pending jobs, flag jobs}
\For{$t=0$ to $d_{\max}$ with $t\in\mathbb{R}$}{
    Add newly released jobs at $t$ to $J$\;
    \If{$t\not\in \cup_{f\in F}[s_f,s_f+p_f)$ and $J\neq\emptyset$ and $t=d_j-p_j$ for some $j\in J$}{
        $s_j\leftarrow t$;
        $F\leftarrow F\cup \{j\}$;
        $S\leftarrow S\cup[s_j,s_j+p_j)$;\;
    }
    \For{$j\in J$}{
        \If{$t\in\cup_{f\in F}[s_f,s_f+p_f)$ or $[t,t+p_j)\subseteq S$}{
            $s_j\leftarrow t$; 
            $S\leftarrow S\cup[s_j,s_j+p_j)$; 
            $J\leftarrow J\setminus{j}$;
        }
    }
}
\end{algorithm}


\begin{lemma}\label{lem:unbounded_agreeable_consecutive_flag_jobs}
    Consider all components $C_i,\dots,C_j$ with their respective flag jobs $F'=\{f_i,\dots,f_j\}$ such that $F'\subseteq C^*_m$ in the optimal schedule. 
    It holds that $[\inf(C_i),\sup( C_{j-1}))\subseteq[\inf(C^*_m),\sup( C^*_m))$. 
\end{lemma}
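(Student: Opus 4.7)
The plan is to verify the two endpoint inequalities $\inf(C^*_m)\le\inf(C_i)$ and $\sup(C_{j-1})\le\sup(C^*_m)$ separately, using the rule that \alg{UnboundedAgreeable} starts every flag job at its latest feasible moment together with Lemma~\ref{lem:unbounded_agreeable_ordered_jobs}. The left endpoint is immediate: since $f_i\in C^*_m$ we have $\inf(C^*_m)\le s^*_{f_i}$, and feasibility forces $s^*_{f_i}\le d_{f_i}-p_{f_i}=s_{f_i}=\inf(C_i)$, giving $\inf(C^*_m)\le\inf(C_i)$.

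For the right endpoint, I would first establish the auxiliary fact $r_{f_j}\ge d_{f_{j-1}}$: if instead $r_{f_j}<d_{f_{j-1}}$, then at time $r_{f_j}$ the job $f_j$ would either lie inside the active flag interval $[s_{f_{j-1}},d_{f_{j-1}})$ or be pending until that interval opens, so the inner for-loop of \alg{UnboundedAgreeable} would attach $f_j$ to $C_{j-1}$ rather than promote it to its own flag, a contradiction. Let $k^*\in C_{j-1}$ realize $\sup(C_{j-1})=s_{k^*}+p_{k^*}$. In the case $k^*=f_{j-1}$ one has $\sup(C_{j-1})=d_{f_{j-1}}$, and combining $f_j\in C^*_m$ with $s^*_{f_j}\ge r_{f_j}\ge d_{f_{j-1}}$ yields $\sup(C^*_m)\ge s^*_{f_j}+p_{f_j}\ge d_{f_{j-1}}+p_{f_j}>d_{f_{j-1}}$. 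In the remaining case $k^*\ne f_{j-1}$ the component definition gives $r_{k^*}\in[s_{f_{j-1}},d_{f_{j-1}})$ and the algorithm sets $s_{k^*}=r_{k^*}$; the chain $r_{f_{j-1}}\le s_{f_{j-1}}\le r_{k^*}<d_{f_{j-1}}\le r_{f_j}$ places $k^*$ between $f_{j-1}$ and $f_j$ in release order. Lemma~\ref{lem:unbounded_agreeable_ordered_jobs} translates this into $s^*_{f_{j-1}}\le s^*_{k^*}\le s^*_{f_j}$; since $s^*_{f_{j-1}}\ge\inf(C^*_m)$ and $s^*_{f_j}+p_{f_j}\le\sup(C^*_m)$, the start $s^*_{k^*}$ lies inside the maximal contiguous busy interval $C^*_m$, forcing $k^*\in C^*_m$ and $\sup(C^*_m)\ge s^*_{k^*}+p_{k^*}\ge r_{k^*}+p_{k^*}=\sup(C_{j-1})$.

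The main obstacle is the last step in the second case: translating the release-order bound $s^*_{f_{j-1}}\le s^*_{k^*}\le s^*_{f_j}$ into the structural statement $k^*\in C^*_m$. This rests on $C^*_m$ being a maximal contiguous busy interval in the optimum, so any job whose optimum start lies inside $[\inf(C^*_m),\sup(C^*_m))$ must contribute to $C^*_m$ (otherwise the interval would extend further and contradict maximality). Once this is in hand, the rest is just bookkeeping with the identities $s_{f_\ell}=d_{f_\ell}-p_{f_\ell}$ and $s_{k^*}=r_{k^*}$.
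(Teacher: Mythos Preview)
Your argument is correct, but it takes a longer detour than the paper's own proof. The paper handles the right endpoint in one stroke by exploiting the \emph{second} scheduling rule of \alg{UnboundedAgreeable} (schedule $j$ whenever $[t,t+p_j)\subseteq S$): since $f_j$ was not absorbed into $C_{j-1}$ under that rule, one gets $r_{f_j}+p_{f_j}>\sup(C_{j-1})$ directly, and then $f_j\in C^*_m$ gives $\sup(C^*_m)\ge s^*_{f_j}+p_{f_j}\ge r_{f_j}+p_{f_j}>\sup(C_{j-1})$ with no case split and no appeal to Lemma~\ref{lem:unbounded_agreeable_ordered_jobs}. You instead use only the first scheduling rule to obtain $r_{f_j}\ge d_{f_{j-1}}$, and then recover the missing inequality by locating the job $k^*$ realizing $\sup(C_{j-1})$ inside $C^*_m$ via the release-ordered optimum. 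That works, but it leans both on Lemma~\ref{lem:unbounded_agreeable_ordered_jobs} and on the paper's structural description of which jobs belong to a component (to justify $r_{k^*}\in[s_{f_{j-1}},d_{f_{j-1}})$ and hence $s_{k^*}=r_{k^*}$). The paper's route is slicker because it uses the full power of the algorithm; yours makes the role of the agreeable structure of the optimum more explicit, at the cost of an extra case analysis.
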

\begin{proof}
    Since the flag job $f_i\in C_i$ is scheduled at its deadline by the definition of the flag job, it holds that $\inf(C_i)\geq\inf(C^*_m)$. 
    Since the flag job $f_j\in C_j$ did not schedule with $C_{j-1}$ it holds that $d_{f_{j-1}}<r_{f_j}$ for flag job $f_{j-1}\in C_{j-1}$, and $r_{f_j}+p_{f_j} >\sup(C_{j-1})$. 
    Since flag job $f_j$ is part of $C^*_m$, this means that $\sup(C^*_m)\geq r_{f_j}+p_{f_j}\geq\sup(C_{j-1})$. 
    Combining this gives $[\inf(C_i),\sup(C_{j-1}))\subseteq[\inf(C^*_m),\sup(C^*_m))$.
\end{proof}

We now show that the algorithm \alg{UnboundedAgreeable} has a competitive ratio of $2$. 
The proof uses a charging argument over the optimal connected components $C^*_1,\dots,C^*_k$. 

\begin{lemma}\label{lem:unbounded_agreeable_alg_2_competitive}
    Algorithm~\alg{UnboundedAgreeable} is $2$-competitive for busy time scheduling with unbounded parallelism and agreeable job instances. 
\end{lemma}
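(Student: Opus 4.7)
The plan is to prove $ALG \le 2\,OPT$ via a charging argument that partitions the algorithm components $C_1,\dots,C_k$ by the optimal component containing their flag job. For each $m$, let $G_m$ denote the set of algorithm components $C_j$ whose flag job $f_j$ lies in $C^*_m$, and write $C_{j(m)}$ for the last (in time) component in $G_m$. Lemma~\ref{lem:unbounded_agreeable_consecutive_flag_jobs} already ensures that every $C_j \in G_m$ with $j \neq j(m)$ is contained in $[\inf(C_i),\sup(C^*_m))$, so the non-last components of $G_m$ fit disjointly inside $C^*_m$.

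To incorporate the last component I would decompose $x_{j(m)} = y_{j(m)} + z_{j(m)}$, where $y_{j(m)}$ denotes the portion of $C_{j(m)}$ lying inside $C^*_m$ and $z_{j(m)} = \max(0,\sup(C_{j(m)})-\sup(C^*_m))$ is the overrun beyond $C^*_m$. Since the algorithm components are pairwise disjoint, the non-last components of $G_m$ together with $y_{j(m)}$ constitute disjoint subintervals of $C^*_m$ whose total length is at most $Sp(C^*_m)$. Summing over $m$ yields $ALG \le OPT + \sum_m z_{j(m)}$, so the remaining task is to show $\sum_m z_{j(m)} \le OPT$. For this I would mimic the argument of Lemma~\ref{lem:unbounded_agreeable_consecutive_flag_jobs} across group boundaries: since the next algorithm flag $f_{j(m)+1}$ was not absorbed into $C_{j(m)}$, one obtains $\sup(C_{j(m)}) \le r_{f_{j(m)+1}} + p_{f_{j(m)+1}} \le \sup(C^*_{m'})$, where $C^*_{m'}$ is the optimal component containing $f_{j(m)+1}$. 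Each overrun therefore lies in $[\sup(C^*_m),\sup(C^*_{m'}))$, and since the $C_{j(m)}$'s are pairwise disjoint, the overruns are pairwise disjoint time intervals.

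The hard part is handling the portion of each overrun that falls into a gap $[\sup(C^*_l),\inf(C^*_{l+1}))$ between optimal components, where no optimal busy time exists to absorb the charge. The resolution exploits the structure of the algorithm: whenever the algorithm is busy inside such a gap, some job $j^*$ cascaded during the flag $f_{j(m)}$ must straddle the gap, so its processing time satisfies $p_{j^*} \ge \inf(C^*_{l+1}) - \sup(C^*_l)$. Since $j^* \in C^*_{l'}$ for some $l'>l$ with $p_{j^*} \le Sp(C^*_{l'})$, and distinct overruns (being subsets of distinct $C_{j(m)}$'s) correspond to distinct cascade-defining jobs, a careful injective charging matches every unit of gap-busy time with an otherwise unused unit of some optimal component's span. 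Combining the in-optimal and in-gap charges yields $\sum_m z_{j(m)} \le OPT$ and hence $ALG \le 2\,OPT$.
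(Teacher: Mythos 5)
Your overall architecture matches the paper's: group the algorithm's components by the optimal component containing their flag job, use Lemma~\ref{lem:unbounded_agreeable_consecutive_flag_jobs} to absorb all but the last component of each group into $Sp(C^*_m)$, and pay for the last component's overrun out of a second copy of $OPT$. The first half, $ALG \le OPT + \sum_m z_{j(m)}$, is sound. The problem is that the second half, $\sum_m z_{j(m)} \le OPT$, is where essentially all of the work lies, and you do not prove it: you reduce it to ``a careful injective charging matches every unit of gap-busy time with an otherwise unused unit of some optimal component's span,'' which asserts the conclusion rather than arguing for it. Injectivity of the cascade-defining jobs does not by itself bound the total charge landing on a fixed optimal component $C^*_l$ by $x^*_l$: in your scheme $C^*_l$ can receive both (i) the physical portions of overruns that lie inside it and (ii) gap charges routed to it because some job $j^*\in C^*_l$ straddles a preceding gap, and nothing in the sketch shows that (i) and (ii) together fit into $x^*_l$. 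There are further unaddressed points: a single overrun can cross several gaps and several optimal components; a gap need not be straddled by a single job, since the connected component can be covered by a chain of jobs each started at $s_j>r_j$; and the last group has no ``next flag'' with which to bound its overrun.

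The paper closes precisely this hole with a concrete two-sided charging per optimal component $C^*_i$: the deferred tail entering $C^*_i$ is charged to the prefix $[\inf(C^*_i), d_{f_p})$, the last flag's non-deferred part to the suffix $[d_{f_{q-1}}, \sup(C^*_i))$, and agreeability yields $d_{f_{q-1}} \ge d_{f_p}$, so prefix and suffix are disjoint and together contribute at most $x^*_i$. The tail's length is controlled because it is at most $p_{j'}$ for the job $j'\in C^*_i$ causing it, and $p_{j'}\le d_{f_p}-\inf(C^*_i)$ since $j'$ must fit inside $C^*_i$ and, by agreeable deadlines, $d_{j'}\le d_{f_p}$. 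Some argument of this kind is what your ``careful injective charging'' would have to become; as written, the proposal correctly identifies the difficulty but does not resolve it, so it is not yet a proof.
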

\begin{proof}
    Consider an optimal schedule with jobs scheduled according to Lemma~\ref{lem:unbounded_agreeable_ordered_jobs}. 
    Let $C^*_1,\dots,C^*_k$ be the components that partition the optimal schedule where each component contains a set of jobs that overlap with spans $x^*_1,\dots,x^*_k$.  
    Algorithm~\ref{alg:unbounded_agreeable_scheduler} creates flag jobs $f_1,\dots,f_m$. 
    Consider the first component $C^*_1$ in the optimal schedule and let $F_1=\{f_1,\dots,f_j\}$ be all flag jobs $F_1\subseteq F$ where $F_1\subseteq C^*_1$. 
    By Lemma~\ref{lem:unbounded_agreeable_consecutive_flag_jobs}, the total span used by all jobs scheduled with flag jobs $f_1,\dots,f_{j-1}$ is at most $x^*_1$, which we charge to $C^*_1$. 
    For the last flag job $f_j$, we consider job $j'$ scheduled with $f_j$ that maximizes $r_{j'}+p_{j'}$. 
    If job $j'\in C^*_1$, then $r_{j'}+p_{j'}\leq\sup(C^*_1)$, so it contributes at most $x^*_1$ which we charge to component $C^*_1$. 
    Otherwise, job $j'\not\in C^*_1$, we only charge $[s_{f_j},d_{f_j})$ to component $C^*_1$ and defer  $[d_{f_j},r_i+p_i)$ to the next optimal component. 
    Consider the component $C^*_i$ and let $f_p,\dots,f_q$ be the flag jobs scheduled in $C^*_i$. 
    We charge three parts to $C^*_i$. 
    \begin{enumerate}
        \item The deferred tail from $f_{p-1}$, i.e. for the job $j'\in C^*_{i}$ scheduled with $f_{p-1}$ that maximizes $r_{j'}+p_{j'}$, we charge $[d_{f_{p-1}},r_{j'}+p_{j'})$ to $[\inf(C^*_i),d_{f_p})$.
        Since $j'\in C^*_i$ and jobs have agreeable deadlines, $d_{j'}\leq d_{f_p}$ and $\inf(C^*_i)\leq d_{j'}-p_{j'}$. 
        \item All jobs scheduled with jobs $f_p,\dots,f_{q-1}$ are at most $x^*_i$ by Lemma~\ref{lem:unbounded_agreeable_consecutive_flag_jobs}, which we charge to $C^*_i$. 
        \item Let job $j'$ scheduled with $f_p$ maximize $r_{j'}+p_{j'}$ over all jobs scheduled with $f_p$. 
        If job $j'\in C^*_i$, then $r_{j'}+p_{j'}\leq\sup(C^*_i)$ and charge $[s_{f_q},r_{j'}+p_{j'})$ to $C^*_i$. 
        If job $j'\not\in C^*_i$, then charge $[s_{f_q},d_{f_q})$ to $[d_{f_{q-1}},\sup(C^*_i))$ and defer $[d_{f_q},r_{j'}+p_{j'})$ to the next component. 
    \end{enumerate}
    For the first and third cases, we charge $d_{f_p} - \inf(C^*_i) + \sup(C^*_i) - d_{f_{q-1}}$ to $C^*_i$. 
    Since $d_{f_{q-1}}\geq d_{f_p}$, it holds that $d_{f_p} - \inf(C^*_i) + \sup(C^*_i) - d_{f_{q-1}}\leq \sup(C^*_i) - \inf(C^*_i) = x^*_i$. 
    The second case charges at most $x^*_i$ onto $C^*_i$. 
    Therefore, each optimal component $C^*_i$ receives a total charge of at most $2\cdot x^*_i$. 
    Then $ALG \leq \sum_{i}2\cdot x^*_i = 2\cdot OPT$, which proves the lemma. 
\end{proof}

By Corollary~\ref{col:unbounded_agreeable_lb}, no online algorithm exists with a competitive ratio less than $2-\varepsilon$ on agreeable instances.
Combining this lower bound with Lemma~\ref{lem:unbounded_agreeable_alg_2_competitive} gives the following theorem. 

\begin{theorem}\label{thm:unbounded_agreeable}
    The online busy time scheduling problem with unbounded parallelism and agreeable jobs admits a $2$-competitive tight algorithm. 
\end{theorem}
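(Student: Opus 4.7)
The plan is to obtain Theorem~\ref{thm:unbounded_agreeable} as an immediate consequence of the two matching bounds already established in this subsection. For the lower bound I would invoke Corollary~\ref{col:unbounded_agreeable_lb}, which rules out any competitive ratio smaller than $2-\varepsilon$ on uniform agreeable instances. Since uniform jobs form a strict subclass of agreeable jobs with arbitrary processing times, the same lower bound transfers verbatim to the more general setting that this theorem addresses, so no additional adversary construction is needed.

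For the matching upper bound I would cite Lemma~\ref{lem:unbounded_agreeable_alg_2_competitive}, which shows that \alg{UnboundedAgreeable} is $2$-competitive on all agreeable instances by a charging argument that distributes at most $2x_i^*$ cost to each optimal component $C_i^*$. Putting the two sides together gives a tight competitive ratio of exactly $2$; the theorem statement follows in a single line.

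The main conceptual obstacle has already been absorbed upstream: it is the charging scheme inside Lemma~\ref{lem:unbounded_agreeable_alg_2_competitive} that handles the ``deferred tail'' from the last flag job of a batch across to the next optimal component, combined with the structural fact (Lemma~\ref{lem:unbounded_agreeable_consecutive_flag_jobs}) that consecutive flag batches inside one optimal component fit inside its span. For the theorem proper the only observation to verify is that the instance used by the Theorem~\ref{thm:unbounded_uniform_lb} adversary is agreeable, which is precisely the remark made immediately before Corollary~\ref{col:unbounded_agreeable_lb}, so the combination is routine.
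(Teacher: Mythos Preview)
Your proposal is correct and mirrors the paper's own argument: the theorem is obtained in one line by combining the lower bound of Corollary~\ref{col:unbounded_agreeable_lb} (noting the adversary instance is agreeable) with the upper bound of Lemma~\ref{lem:unbounded_agreeable_alg_2_competitive}. No additional work is needed.
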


\section{Bounded parallelism}\label{sec:busy_time}
We study homogeneous machines, each with bounded parallelism $g$, i.e. at most $g$ jobs may run simultaneously on the same machine. 
An unbounded number of machines can be opened if necessary. 
The cost of a schedule is the sum of machine spans. 
For heterogeneous machines, Calinescu showed an $8$-competitive algorithm for uniform jobs \cite{DBLP:conf/esa/CalinescuDKZ24}. 
For uniform jobs on homogeneous machines, we give a lower bound for small $g=2$, and a $2$-competitive algorithm without the need for lookahead. 
For arbitrary job processing times, we improve on the work of Koehler and Khuller \cite{DBLP:conf/wads/KoehlerK17}. 

\subsection{Online busy time with uniform jobs}\label{sec:busy_time_uniform}
We consider jobs $j\in J$ with uniform processing time, i.e., $p_j=p$ for all $j\in J$, and assume that $p_j = 1$. 
Since any schedule is optimal when machines have parallelism of $g=1$, we assume that $g\geq 2$. 
In Section~\ref{sec:unbounded_uniform} we showed that no online algorithm exists that admits a competitive ratio less than $2$. 
This lower bound uses many jobs scheduled in parallel on the same machine.  
We show that no algorithm has a competitive ratio less than $\sqrt{2}$, even if $g=2$. 

\begin{lemma}\label{lem:bounded_uniform_lb}
    No online algorithm with a competitive ratio less than $\sqrt{2}$ exists for busy time scheduling with bounded parallelism $g\geq 2$ and uniform jobs. 
\end{lemma}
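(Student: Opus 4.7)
The plan is to construct an adaptive adversary in the spirit of Theorem~\ref{thm:unbounded_uniform_lb}, but tuned to the parallelism constraint $g=2$. The factor $2$ in the unbounded case arose because the optimum could hide every second ``component'' of the algorithm's schedule inside its neighbours essentially for free, using that $g=\infty$ places no cap on how many jobs share a machine slot. With $g=2$ this hiding is capacity-bounded, so the achievable savings for the optimum are strictly smaller; one has to show they are still large enough, namely a factor of $\sqrt 2$.

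\textbf{Construction.} I would use a flag-based release sequence similar to that in Section~\ref{sec:unbounded_uniform}, in which each round begins with a rigid flag job that any algorithm must place immediately, followed by a bounded number of extra unit jobs whose release times and deadlines depend adaptively on the algorithm's choice of when to start the flag. The key calibration is that each round carries enough parallel jobs so that the algorithm is essentially forced to keep parallelism close to $g=2$ throughout the round, while simultaneously the jobs are released in such a way that the offline optimum can merge pairs of consecutive rounds onto a single machine slot of width $2$ (but never three at once, which would exceed $g$). The algorithm's schedule then decomposes into components $C_1,\dots,C_k$ with spans $x_1,\dots,x_k$, and, as in Theorem~\ref{thm:unbounded_uniform_lb}, one compares to a rearranged schedule in which the optimum absorbs alternate components into their neighbours.

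\textbf{Balancing to $\sqrt{2}$.} Let $\alpha$ parametrise the span introduced per round and let the number of extra jobs per round be chosen so that two consecutive rounds of algorithmic span $\alpha$ each require, in the optimum, a single machine of span $\alpha+1$ (one unit of ``slack'' consumed by the shift). The algorithmic cost is then proportional to $k\alpha$ while the optimum pays proportionally to $(k/2)(\alpha+1)$, giving a ratio $\frac{2\alpha}{\alpha+1}$. Depending on the algorithm's behaviour, a dual branch of the adversary forces the complementary ratio $\frac{2}{\alpha+1}\cdot\alpha'$ with $\alpha'$ the span in the opposite strategy, and equating the two worst cases over the algorithm's choices yields a fixed point at $\alpha^2=2$, i.e.\ the ratio $\sqrt 2$. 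Taking the number of rounds $k\to\infty$ and $\epsilon\to 0$ absorbs the boundary contributions from the first and last rounds exactly as in Theorem~\ref{thm:unbounded_uniform_lb}.

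\textbf{Main obstacle.} The principal difficulty is to ensure that the $g=2$ constraint binds for the algorithm but not, or at least less tightly, for the optimum: the algorithm has to be prevented from anticipating the correct trade-off between opening a fresh machine and extending an existing one, while the offline optimum must be genuinely able to exploit the parallelism-$2$ slack to pair adjacent rounds. Threading this needle and pushing the resulting balancing calculation down to exactly $\sqrt 2$, rather than some weaker constant, is the technical heart of the argument.
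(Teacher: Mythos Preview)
Your proposal is not yet a proof: it outlines an asymptotic, multi-round adaptation of the Theorem~\ref{thm:unbounded_uniform_lb} construction but never specifies the actual job releases, and the balancing arithmetic you show does not close. From your own expression the ratio is $\tfrac{2\alpha}{\alpha+1}$, which equals $\sqrt{2}$ at $\alpha=1+\sqrt{2}$, not at $\alpha^2=2$; the ``dual branch'' $\tfrac{2}{\alpha+1}\cdot\alpha'$ is left undefined, so the claimed fixed point at $\alpha=\sqrt{2}$ is unsupported. You also openly flag the hard part---making $g=2$ bind for the algorithm but not for the optimum---without resolving it, so as written this is a plan with the key construction missing.

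More importantly, the paper does \emph{not} use a multi-round asymptotic argument here at all. The proof of Lemma~\ref{lem:bounded_uniform_lb} is a one-shot, four-job adaptive instance for $g=2$: at time $0$ release a rigid job $j_1$ with $d_1=1$ and a flexible job $j_2$ with $d_2=3$. The single decision the algorithm must make is when to start $j_2$. If it waits past a threshold $\alpha$, the adversary stops and the algorithm has span at least $1+\alpha$ against $OPT=1$. If it commits $j_2$ early (at some $s_2\le\alpha$), the adversary releases a rigid $j_3$ at $r_3=s_2$ (forcing a second machine, since $j_1,j_2$ already fill capacity $g=2$ there) and later $j_4$ at $[2,3)$; the algorithm pays $3+s_2$ while $OPT$ pairs $\{j_1,j_3\}$ and $\{j_2,j_4\}$ for cost $2+s_2$. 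Equating $1+\alpha=\tfrac{3+\alpha}{2+\alpha}$ gives $\alpha=\sqrt{2}-1$ and ratio $\sqrt{2}$. No limit $k\to\infty$, no component rearrangement, no averaging lemma---the whole argument hinges on a single threshold and four jobs. Your approach is far more elaborate than what is needed, and the crucial step (the concrete instance) is exactly the part you left unspecified.
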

\begin{proof}
It is sufficient to prove the bound for $g=2$. 
Let $\alpha\in[0,1]$ which we fix later. 
At time $0$, release two jobs $j_1,j_2$ where $d_1 = 1$ and $d_2 = 3$. 
The adversary releases the next jobs adaptively in the following two cases. 
\begin{enumerate}
    \item If job $j_2$ is scheduled after $\alpha$, the adversary releases no more jobs. 
    Only a single machine is used, which has a span of at least $1+\alpha$. 
    \item If job $j_2$ is scheduled before or at $\alpha$, release rigid job $j_3$ at $r_3=s_2\leq\alpha$ with deadline $d_3=r_3+1$, which has to use a new machine. 
    The adversary releases job $j_4$ with $r_{j_4}=2$ and $d_{j_4}=3$. 
    Jobs $j_1$, $j_2$, and $j_4$ use a span of $2+s_2$. 
    Since job $j_3$ is scheduled on a new machine, it uses a span of $1$.
    Therefore, any algorithm incurs a cost of $3+s_2$. 
    Optimally, job $j_2$ is started at $s_2=r_4$ with job $j_4$, and jobs $j_1$ and $j_3$ are scheduled on the same machine, which incurs a busy time cost of $2+s_3$. 
\end{enumerate}
The competitive ratio of any algorithm is at least $\min\bigl\{\frac{1+\alpha}{1},\frac{3+s_2}{2+s_2}\bigr\}$. 
The competitive ratio is minimized when $s_2=\alpha$. 
For $\alpha=\sqrt{2}-1$, this gives $\frac{3+\alpha}{2+\alpha}= \frac{1+\alpha}{1}$, which gives a competitive ratio lower bound of $1+\alpha = \sqrt{2}$ for $g=2$. 
\end{proof}

We give a $2$-competitive online algorithm for homogeneous machines with bounded parallelism $g$ and uniform jobs. 
We use \emph{bundles} to group jobs together with parallelism at most $g$. 
Each bundle $B_i$ is scheduled on a new machine by the algorithm. 
When the earliest deadline job $j$ must be scheduled at $s_j=d_j-1$, adds it to a new bundle. 
The bundle is greedily filled to capacity with jobs that fit within $[s_j,s_j+2)$, prioritizing jobs with earlier deadlines. 
The algorithm is given in Algorithm~\ref{alg:busy_time_uniform_scheduler}. 

\begin{algorithm}[hbt!]
\caption{\alg{UniformScheduler}}\label{alg:busy_time_uniform_scheduler}
\DontPrintSemicolon
$S,J,F,\mathcal{B}\leftarrow\emptyset$\tcp*{schedule, pending jobs, flag jobs, bundles}
\For{$t=0$ to $d_{max}$ with $t\in\mathbb{R}$}{
    Add newly released jobs at $t$ to $J$\;
    \If{$d_f\leq t$ for some $f\in F$ and less than $g$ jobs are assigned to $B_i\in\mathcal{B}$}{
        \eIf{$d_f<t$ and there exists $j\in J$ with $d_j-1\leq d_f$}{
        Schedule $j$ to $B_i$; $J\leftarrow J\backslash\{j\}$;\;
        }{
        Fill $B_i$ with up to $g$ jobs $j\in J$ by non-decreasing $d_j$; $J\leftarrow J\backslash\{j\}$;\;
        }
    }
    \If{There exists a job $j\in J$ such that $t=d_j-p$}{
        $F = F\cup\{j\}$;
        Create bundle $B$ with $j\in B$ and $s_j=t$; 
        $\mathcal{B}\leftarrow\mathcal{B}\cup\{B\}$\;
    }
}
Schedule all $\mathcal{B}$ to $S$ on separate machines\;
\end{algorithm}

\begin{theorem}\label{thm:bounded_uniform}
     Online busy time scheduling with uniform processing time jobs admits a $2$-competitive algorithm. 
\end{theorem}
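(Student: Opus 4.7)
The plan is to show $ALG \le 2 \cdot OPT$ in the style of Theorem~\ref{thm:unbounded_uniform}: I would first bound $ALG$ by (at most) twice the number of bundles, and then lower-bound $OPT$ by the same number through a charging argument.

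By construction of Algorithm~\ref{alg:busy_time_uniform_scheduler}, each bundle $B_i$ lives on its own machine, and every $j \in B_i$ satisfies $s_j \in [s_{f_i}, s_{f_i}+1)$, so $B_i$'s busy interval is contained in $[s_{f_i}, s_{f_i}+2)$ and $Sp(B_i)\le 2$. Summed over all bundles, $ALG\le 2|\mathcal{B}|$.

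For the lower bound, I would group ALG's bundles into \emph{chains}: maximal runs $B_{i_1},\dots,B_{i_\ell}$ of consecutive bundles whose flag-job intervals $[s_{f_{i_p}}, s_{f_{i_p}}+1)$ pairwise overlap. The greedy-filling rule drives the key structural property: whenever a new flag job $f_{i_{p+1}}$ is created while $B_{i_p}$'s window remains active, $B_{i_p}$ must already be saturated with $g$ jobs, for otherwise the algorithm would have absorbed $f_{i_{p+1}}$ into $B_{i_p}$ rather than opening a fresh bundle. Hence a chain of length $\ell$ contains at least $(\ell-1)g+1$ jobs, all of which $OPT$ must schedule. Because the chain's flag-job deadlines lie in a single sub-unit-length window, the parallelism cap $g$ forces $OPT$ to use at least $\lceil((\ell-1)g+1)/g\rceil=\ell$ units of span accommodating these chain jobs. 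Summed over chains, $|\mathcal{B}|=\sum_c\ell_c\le OPT$, and therefore $ALG\le 2|\mathcal{B}|\le 2\cdot OPT$.

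The main obstacle is the per-chain lower bound on $OPT$, together with the non-double-counting across chains. Non-flag jobs in a saturated bundle have deadlines $\ge s_{f_p}+2$ and so may be scheduled in $OPT$ outside the chain's tight deadline window, giving $OPT$ flexibility that weakens the naive parallelism bound; and distinct chains can in principle share $OPT$ machine-slots when release-deadline windows allow. I would handle these via the chain-maximality gap (consecutive chains are separated by more than one unit in flag-job start time, yielding disjoint flag-phase contributions to $OPT$) combined with the job-counting argument $OPT\ge n/g$ applied to the non-flag jobs, in the same bookkeeping spirit as the charging in Lemma~\ref{lem:unbounded_agreeable_alg_2_competitive}.
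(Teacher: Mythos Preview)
Your high-level skeleton---bound $ALG \le 2|\mathcal{B}|$, then argue $|\mathcal{B}| \le OPT$---is the same as the paper's, but the route to $|\mathcal{B}|\le OPT$ is entirely different. The paper does not reason about chains of ALG-bundles at all. Instead it decomposes an optimal schedule: it extracts a maximal family of $k$ pairwise-disjoint unit intervals $[t_i,t_i+1)$ from the busy time of $OPT$'s machines (so automatically $k\le OPT$), defines the \emph{optimal bundle} $B^*_i$ as the jobs starting in $[t_i,t_i+1)$ on that machine (hence $|B^*_i|\le g$), orders the $B^*_i$ by the deadline of their earliest-starting job, and then shows by induction on $i$ that $|B^*_1\cup\cdots\cup B^*_i|\le |B_1\cup\cdots\cup B_i|$, using that the algorithm's EDF filling would have absorbed any ``extra'' job into an earlier ALG-bundle. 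This yields $k'\le k$ directly, without ever lower-bounding $OPT$ from ALG's side.

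Your chain argument, by contrast, has a gap that the fixes you sketch do not close. Take $g=2$ and four unit jobs $a=(r{=}0,d{=}1)$, $b=(r{=}0,d{=}10)$, $c=(r{=}0.5,d{=}1.5)$, $e=(r{=}5,d{=}6)$. Under your greedy-fill reading, $B_1=\{a,b\}$ saturates at $t=0$, $B_2=\{c\}$ opens at $t=0.5$, and $B_3=\{e\}$ opens at $t=5$; so you have one chain $\{B_1,B_2\}$ of length $2$ and a singleton $\{B_3\}$, and your target would require $OPT\ge 3$. But $OPT=2.5$: schedule $a,c$ together on one machine (span $[0,1.5)$) and $b,e$ together at $[5,6)$ on another. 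The failure is exactly the cross-chain leakage you flagged---the non-flag job $b$ from the first chain rides for free in the second chain's slot---and neither proposed repair helps: the two chains are already separated by far more than one unit in flag-start time, and the global load bound $OPT\ge n/g$ yields only $4/2=2$. Separately, the ceiling in $\lceil((\ell-1)g+1)/g\rceil=\ell$ is unjustified for a continuous cost; without it each chain contributes only $\ell-1+1/g$, which summed over chains again falls short of $|\mathcal{B}|$. The paper's exchange-against-$OPT$ argument sidesteps all of this by never attempting a per-chain lower bound.
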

\begin{proof}
    Let $B_i\in\{B_1,\dots,B_{k'}\}$ be the \emph{bundles} created by the algorithm in order, where each bundle $B_i\subseteq J$ is a set of jobs scheduled on the same machine with its unique flag job $f_i\in F$ with $s_{f_i}=d_{f_i}-1$. 
    Each bundle $B_i$ occupies $[s_{f_i},s_{f_i}+2)$, which contributes span at most $2$. 
    Hence $ALG\leq\sum^{k'}_{i=1}2 = 2|F|$, since $p=1$ for all jobs in $F$. 

    We define a decomposition of an optimal schedule. 
    Take the maximal set of pairwise unit intervals $I^*_i=[t_i,t_i+1)$ over the span of the machines. 
    Let optimal bundle $B^*_i$ be the set of all jobs such that $s_j\in[t_i,t_i+1)$ where $f^*_i$ is the flag job with the earliest start time $s_{f^*_i}$ in $B^*_i$. 
    Let $F^*$ be the set of all such flag jobs. 
    Since intervals are disjoint, it holds that $k:=|F^*|\leq OPT$ and $|B^*_i|\leq g$. 
    We index the bundles such that $d_{f^*_1}\leq\dots\leq d_{f^*_k}$. 
    
    We show that $k'\leq k$. 
    We show by induction that $|B^*_1\cup\dots\cup B^*_i|\leq|B_1\cup\dots\cup B_i|$. 
    The base case $i=1$ holds by construction. 
    For the inductive step, consider a job $j\in B^*_i$ such that $j\not\in\cup_{1\leq i'\leq i}B_{i'}$, i.e. not scheduled by the algorithm. 
    The algorithm chooses job $f_i$ with the smallest remaining deadline, so $t_i\leq d_{f_i}$. 
    Since the algorithm does not schedule job $j$, $d_j\geq d_{f_i}$ and $s_{f_i}+1= d_{f_i}\leq d_j$. 
    Since $r_j\leq t_i+1\leq d_{f_i}+1$ as $j\in B^*_i$, job $j$ can be assigned to $B_i$ within the interval $[s_{f_i},s_{f_i}+2)$. 
    Moreover, it holds that $|B^*_i|\leq g$ and $B_i$ has capacity $2g$ across interval $[s_{f_i},s_{f_i}+2)$. 
    Therefore, if $|B^*_1\cup\dots\cup B^*_i|>|B_1\cup\dots\cup B_i|$, job $j\in B^*_i$ is unassigned, but feasible in $B_i$ which gives a contradiction. 
    Therefore $k'\leq k$. 
    \[
        ALG\leq2|F|\leq2|F^*|\leq 2\cdot OPT
    \]
    where $|F|\leq|F^*|$ holds by $k'\leq k$.
    This gives a $2$-competitive ratio for the online algorithm. 
\end{proof}

\subsection{Online busy time scheduling with arbitrary processing times}\label{sec:busy_time_general_lookahead}
We consider arbitrary processing times with homogeneous machines of bounded parallelism $g$. 
We allow the online algorithm \emph{lookahead}. 
With a lookahead of $\ell$, job $j\in J$ is revealed at time $r_j-\ell$. 
Shalom et al. showed no online algorithm can achieve a competitive ratio better than $g$ in this setting \cite{DBLP:journals/tcs/ShalomVWYZ14}. 
Koehler and Khuller obtained a $12$-competitive algorithm with $2p_{max}$ lookahead where $p_{max}= \max_{j\in J}(p_j)$ \cite{DBLP:conf/wads/KoehlerK17}. 
We improve this to a $9$-competitive online algorithm with only $p_{max}$ lookahead. 
Following Koehler and Khuller \cite{DBLP:conf/wads/KoehlerK17}, we first fix jobs $j\in J$ to an interval within $[r_j,d_j)$ and treat the job as rigid. 
We assign the rigid jobs to machines using \emph{tracks} as in Chang et al. \cite{DBLP:journals/scheduling/ChangKM17}. 
\begin{definition}[\cite{DBLP:journals/scheduling/ChangKM17}]
    A \textbf{track} is a set of pairwise disjoint rigid jobs.
\end{definition}
Tracks are grouped into \emph{bundles} $B_i$, each containing at most $g$ tracks. 
Bundles are then feasibly assigned to individual machines. 
Since $g=1$ is trivial, we assume that $g\geq 2$. 
We show that $Sp(B_1)\leq OPT_\infty(U)\leq OPT(U)$ where $OPT_\infty$ is the optimal schedule of jobs $U$ of rigid jobs after fixing them for unbounded parallelism. 
Similar to Chang et al., we show that $\sum_{i>1}Sp(B_i)\leq\frac{4}{g}\ell(U)\leq 4OPT(U)$ where $\ell(U)$ is the \emph{load} ($\ell(U)=\sum_{j\in U}p_j$) of the jobs in set $U$ \cite{DBLP:journals/scheduling/ChangKM17}. 

We show a new machine assignment algorithm \alg{OnlineGreedyTracking} for the online setting given in Algorithm~\ref{alg:online_greedy_tracking}. 
Let $\mathcal{Q}_i=(T_{i,1},T_{i,2})$ be a pair of tracks. 
\alg{OnlineGreedyTracking} finds pairs of tracks $\mathcal{Q}_i$ that spans all other jobs assigned to $\mathcal{Q}_j$ for $j>i$. 
Consider the first pair of tracks $\mathcal{Q}_1$. 
Once $t=r_j$, the job is assigned to $T_{1,1}$ breaking ties in favor of later deadlines. 
Since we consider a lookahead of $p_{\max}$, we find the job $j'$ with the maximum deadline such that $r_{j'}\leq d_{j}$ and reserve it for $T_{1,2}$. 
If no such $j'$ exists, no job overlaps with $j$, and the job $j'$ with the earliest $r_{j'}$ is chosen. 
If any job $j''$ is revealed with $d_{j''} > d_{j'}$ and $r_{j''}\leq d_j$, job $j''$ replaces the reserved job $j'$. 
Once we reach the reserved job, it is scheduled and the process is repeated. 
Once the reserved job is scheduled, no revision has to be done because of the $p_{\max}$ lookahead. 
This is done in order of the tracks and index $i$ of pairs $\mathcal{Q}_i$. 
The pairs of tracks are assigned to bundles in order of their index. 
Up to $g$ tracks are assigned to a single bundle $B_k$. 

\begin{algorithm}[hbt!]
\caption{\alg{OnlineGreedyTracking}}\label{alg:online_greedy_tracking}
\DontPrintSemicolon
\For{$t=0$ to $d_{\max}$ with $t\in\mathbb{R}$}{
  $U\leftarrow\{j:\ r_j\le t+p_{\max}$ with $j$ unassigned and unreserved$\}$\;
  \For{all pairs $\mathcal{Q}_i$ in order}{
    \If{its reserved job $j$ has $r_j=t$}{place $j$; reserve $j'\in U$ with $\max d_{j'}$ overlapping $j$ (else earliest $r_{j'}$)}
    Reserve $j'\in U$ with $\max d_{j'}$ overlapping assigned job $j$ in $\mathcal{Q}_i$
  }
  \For{$j\in U$ with $r_j=t$ not assigned (breaking ties on $d_j$)}{
    start new pair $\mathcal{Q}$; place $j$ in $\mathcal{Q}$; pick $j'$ as above; reserve $j'$ to $\mathcal{Q}$
  }
}
\end{algorithm}

\begin{lemma}\label{lem:bounded_general_OnlineGreedyTracking}
    The online algorithm \alg{OnlineGreedyTracking} is $4$-competitive when granted $p_{max}$ lookahead. 
\end{lemma}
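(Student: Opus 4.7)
The plan is to follow the two-step decomposition sketched just before the lemma: bound the first bundle $B_1$ against the unbounded-parallelism optimum on the rigid job set $U$, and amortize all later bundles against the total load $\ell(U)=\sum_{j\in U}p_j$. Combining the two via $\ell(U)\leq g\cdot OPT(U)$ (which holds because at any instant each machine processes at most $g$ jobs) will yield the claimed ratio.

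For the first bundle, I would argue that \alg{OnlineGreedyTracking}'s pair-formation rule mimics a greedy interval-cover on $U$: whenever a flag job $j$ is placed on $T_{1,1}$, the reservation for $T_{1,2}$ is always chosen to be the visible job of maximum deadline still overlapping $j$, with revisions only ever extending the right endpoint. The interval covered by pair $\mathcal{Q}_1$ is therefore $[r_j, d_{j'})$ for the final reserved $j'$, the rightmost overlapping job. Iterating over successive pairs inside $B_1$, the union of covered intervals has the same endpoints as a greedy rightmost-cover, which is exactly the structure of an optimal single-machine schedule for rigid jobs under unbounded parallelism. Hence $Sp(B_1)\leq OPT_\infty(U)\leq OPT(U)$.

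For the later bundles $B_2, B_3,\ldots$, the key structural claim is a density invariant: a new pair $\mathcal{Q}_i$ is opened for a job $j$ only when every existing pair already has its reservation slot committed to a job that strictly overlaps $j$'s release time. Since bundles aggregate $g$ tracks each, this means whenever time $t$ lies inside $Sp(B_i)$ for some $i>1$, a constant fraction of $g$ jobs in earlier bundles are simultaneously being processed at $t$ (with total load proportional to the span contribution). Charging every unit of span in a later bundle to the overlapping load in earlier pairs, and summing, I would derive $\sum_{i>1}Sp(B_i)\leq \tfrac{4}{g}\ell(U)\leq 4\cdot OPT(U)$, where the factor $4$ (versus Chang et al.'s offline $3$) is the price paid for online commitment.

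The main obstacle is establishing the density invariant online. Offline, Chang et al. pick pair extensions with full future information; online, reservations are chosen inside a $p_{\max}$-lookahead window and may need revision. I would show that a reservation is revised only by a job of strictly larger deadline that still overlaps the current flag, so the finally-committed reserved job dominates any candidate available at commitment time. The $p_{\max}$-lookahead is exactly enough to guarantee that, once we fix a reservation, no later-arriving job could have extended that pair any further; any job that would have fit must have been visible at commitment. This is what lets me treat each newly opened pair as carrying "fresh" load, legitimately chargeable to the new bundle rather than to some pair already saturated. Combining $Sp(B_1)\leq OPT(U)$ with the amortization bound, and noting that $Sp(B_1)$ is subsumed into the load-charging on the earliest saturated pair, gives $ALG\leq 4\cdot OPT(U)$ and hence the $4$-competitive ratio.
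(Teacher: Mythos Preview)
Your high-level decomposition matches the paper's, but two steps do not go through as written.

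First, the ``density invariant'' you state is not the invariant the algorithm actually maintains, and it does not imply the pointwise-load claim you need. You assert that a new pair $\mathcal{Q}_i$ is opened for a job $j$ only when every existing pair has a reservation overlapping $r_j$; but a pair may simply have no job overlapping its currently placed job, in which case the algorithm reserves the \emph{earliest-release} unassigned job instead, and that reservation need not overlap $r_j$ at all. Even granting your release-time claim, it would only give density at the single instant $r_j$, not at every $t\in Sp(B_i)$, so the charging ``every unit of span in a later bundle to overlapping load in earlier pairs'' has no foundation. The paper instead proves a \emph{span-containment} invariant: by case analysis on why a job $j$ fails to be assigned or reserved to $\mathcal{Q}_i$, it shows $[r_j,d_j)$ is already covered by the span of $\mathcal{Q}_i$, hence $Sp(\mathcal{Q}_{i'})\subseteq Sp(\mathcal{Q}_i)$ for all $i'>i$. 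This immediately gives $Sp(B_k)=Sp(\mathcal{Q}_i)$ for the first pair $\mathcal{Q}_i$ in $B_k$, and then the load of that pair (two tracks) is charged against the load of the preceding bundle, yielding $Sp(B_k)\le \tfrac{4}{g}\,\ell(B_{k-1})$ and hence $\sum_{k>1}Sp(B_k)\le \tfrac{4}{g}\,\ell(U)\le 4\cdot OPT(U)$.

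Second, your final combination is wrong. You claim $Sp(B_1)$ is ``subsumed into the load-charging on the earliest saturated pair'' to conclude $ALG\le 4\cdot OPT(U)$, but nothing in the argument absorbs $Sp(B_1)$: the load-charging only bounds $\sum_{k>1}Sp(B_k)$, and $B_1$ has no predecessor bundle to charge to. The paper does not attempt this absorption either; the ``$4$'' in the lemma refers precisely to the bound on the later bundles, and $Sp(B_1)\le OPT_\infty(U)$ is kept as a separate term, combined only in Theorem~\ref{thm:bounded_general_ub} with the $5$-competitive fixing step to yield $5+4=9$. Your two bounds as stated give at best $5\cdot OPT(U)$, not $4$.
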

\begin{proof}
    We maintain the invariant that earlier pairs cover subsequent pairs. 
    Consider the pair of tracks $\mathcal{Q}_i$. 
    We only consider jobs that are neither assigned nor reserved to earlier tracks. 
    Whenever a job $j$ is revealed for the first time, the algorithm can schedule $j$ to a track of $\mathcal{Q}_i$. 
    Suppose job $j$ is not assigned or reserved to $\mathcal{Q}_i$. 
    We consider the following cases. 
    \begin{enumerate}
        \item If no job is assigned to the tracks of $\mathcal{Q}_i$, job $j$ has the earliest release time $r_j$ and would be assigned. 
        \item If at most one job $j'$ assigned to $\mathcal{Q}_i$ is not disjoint from job $j$, job $j$ can be reserved to $\mathcal{Q}_i$. 
        Since it is not, it is contained in the span of $\mathcal{Q}_i$, or there exists a job $j''$ reserved to $\mathcal{Q}_i$ with $d_{j''}\geq d_{j}$. 
        Since $j''$ was reserved to $\mathcal{Q}_i$ before $j$, it holds that $r_{j''}\leq r_j$. 
        Therefore, job $j''$ is not disjoint with $j'$, so $d_{j''}\geq d_j$ and $r_{j''}\leq r_j$. 
        Thus job $j$ is contained in the span of $j'$ and $j''$ and is not assigned to $\mathcal{Q}_i$.  
        Otherwise, substitute $j''$ for $j$ as the reserved job. 
    \end{enumerate}
    In all cases, when a new job $j$ is revealed, it can be reassigned or reserved to a track. 
    The reservations for all tracks is revised when a new job is revealed, maintaining the span invariant that earlier pairs cover subsequent pairs. 
    Once a job $j$ is scheduled, all jobs that overlap with $j$ are revealed by the $p_{\max}$ lookahead, so no revision is needed for $j$. 
    Let $\mathcal{Q}_i$ be the earliest pair assigned to bundle $B_k$. 
    Since $\mathcal{Q}_i$ covers all subsequent pairs, it holds that $Sp(\mathcal{Q}_i)=Sp(B_k)$. 
    Notice that $\frac{1}{g}\ell(J)$ is a lower bound on the span to schedule all jobs $J$ on a single machine. 
    We look at the total sum of processing times that is assigned to a bundle and charge it to the earlier bundle. 
    By the invariant it holds that $Sp(\mathcal{Q}_i)\leq Sp(\mathcal{Q}_{i-1})$ where pair $\mathcal{Q}_{i-1}$ is assigned to $B_{i-1}$. 
    Therefore the following inequality holds. 
    \[
        Sp(B_i)=Sp(\mathcal{Q}_i)\leq\ell(T_{i,1}\cup T_{i,2})\leq2\cdot\ell(T^*_{i,1}\cup T^*_{i,2})\leq2\cdot\frac{2}{g}\ell(B_{i-1})
    \]
    Where $\ell(T^*_{i,1}\cup T^*_{i,2})$ is the upper bound of the sum of processing times of the jobs assigned to the first two tracks of machine $i$. 
    With the additional factor $2$ due to the number of tracks used for each $\mathcal{Q}_i$ is at most $2$. 
    Therefore, $2\cdot\frac{2}{g}\ell(B_{i-1})\leq 4\cdot OPT(J)$ for the bundles $B_i$ with $i>1$ where $OPT(J)$ is the optimal schedule for the jobs $J$.
\end{proof}

We use the unbounded $5$-competitive algorithm by Koehler and Khuller \cite{DBLP:conf/wads/KoehlerK17} for the initial jobs to obtain the competitive ratio of $9$. 
\begin{theorem}\label{thm:bounded_general_ub}
    \alg{OnlineGreedyTracking} is a $9$-competitive algorithm for online busy time scheduling with $p_{\max}$ lookahead. 
\end{theorem}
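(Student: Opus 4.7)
The plan is to decompose the algorithm's total busy time as $ALG = Sp(B_1) + \sum_{i>1} Sp(B_i)$ and bound each part by $5 \cdot OPT(J)$ and $4 \cdot OPT(J)$ respectively, using the two stages of the overall procedure. Stage one runs the $5$-competitive unbounded-parallelism algorithm of Koehler and Khuller to fix a start time $s_j \in [r_j, d_j - p_j]$ online for each $j \in J$, yielding a set $U$ of rigid jobs whose union forms the unbounded-parallelism schedule. Stage two runs \alg{OnlineGreedyTracking} on these rigid jobs to build pairs of tracks, groups up to $g$ tracks into bundles, and dispatches each bundle to a single machine.

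For the sum over $i > 1$, Lemma~\ref{lem:bounded_general_OnlineGreedyTracking} already gives what is needed: chaining the per-bundle estimate $Sp(B_i) \le \frac{4}{g}\ell(B_{i-1})$ and using the load lower bound $\frac{1}{g}\ell(J) \le OPT(J)$ yields $\sum_{i>1} Sp(B_i) \le 4 \cdot OPT(J)$. For the leading bundle, the invariant from \alg{OnlineGreedyTracking} (earlier pairs cover all subsequent ones) implies that the jobs of $B_1$ form a subset of $U$, so $Sp(B_1) \le Sp(U)$. But $Sp(U)$ is precisely the busy time produced by the stage-one Koehler--Khuller schedule on $J$, whence $Sp(U) \le 5 \cdot OPT_\infty(J)$; and since relaxing the parallelism constraint can only lower the optimum, $OPT_\infty(J) \le OPT(J)$, giving $Sp(B_1) \le 5 \cdot OPT(J)$.

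Summing the two contributions yields $ALG \le 5 \cdot OPT(J) + 4 \cdot OPT(J) = 9 \cdot OPT(J)$, as required. The main subtlety is the interface between the two stages: one must verify that stage one commits $s_j$ as soon as $j$ is revealed, so that by time $t$ every rigid job $j$ with $r_j \le t + p_{\max}$ is already visible to stage two, which is exactly the lookahead that \alg{OnlineGreedyTracking} requires in Lemma~\ref{lem:bounded_general_OnlineGreedyTracking}. Once this consistency check is discharged and the inequality $OPT_\infty(J) \le OPT(J)$ is noted, the remaining arithmetic is immediate.
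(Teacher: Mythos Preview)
Your proposal is correct and follows essentially the same route as the paper: split $ALG = Sp(B_1) + \sum_{i>1}Sp(B_i)$, bound the first term by $5\cdot OPT(J)$ via the Koehler--Khuller unbounded-parallelism stage together with $OPT_\infty(J)\le OPT(J)$, and bound the tail by $4\cdot OPT(J)$ via Lemma~\ref{lem:bounded_general_OnlineGreedyTracking} and the load lower bound.

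One small correction on the interface remark: the Koehler--Khuller algorithm does \emph{not} commit $s_j$ the moment $j$ is revealed (it is a delay-until-forced style scheduler), so your proposed verification would fail as stated. The right way to discharge the check is that the $p_{\max}$ lookahead on the flexible instance lets you simulate the stage-one algorithm $p_{\max}$ ahead on a virtual clock; since an online algorithm fixes $s_j$ no later than virtual time $s_j$, the rigid interval $[s_j,s_j+p_j)$ is known by real time $s_j - p_{\max}$, which is precisely the lookahead \alg{OnlineGreedyTracking} requires. With that adjustment your argument goes through unchanged.
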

\begin{proof}
    We get $Sp(B_1)\leq OPT_{\infty}(J')\leq 5\cdot OPT(J)$ obtained by Koehler and Khuller \cite{DBLP:conf/wads/KoehlerK17} which in combination with \alg{OnlineGreedyTracking} by Lemma~\ref{lem:bounded_general_OnlineGreedyTracking} gives an online ratio of 
    \[
        Sp(B_1)+\sum_{i>1}Sp(B_i)\leq 5\cdot OPT(J) + 4\cdot OPT(J) = 9\cdot OPT(J)
    \]
    This results in a $9$-competitive algorithm by fixing the jobs $p_{max}$ in advance using the $5$-competitive algorithm by Koehler and Khuller, and the \alg{OnlineGreedyTracking} algorithm. 
\end{proof}



\begin{credits}
\subsubsection{\ackname} This work was funded by the Deutsche Forschungsgemeinschaft (DFG, German Research Foundation) - GRK 2201 - Projektnummer 277991500. 
We thank the anonymous reviewers for their thoughtful and constructive feedback. 

\end{credits}

\end{document}